\definecolor{blueLink}{rgb}{0,0.2,0.8}
\newcommand{\I}{\ensuremath{\mathcal{I}}\xspace}
\newcommand{\R}{\ensuremath{V}\xspace}
\newcommand{\E}{\ensuremath{E}\xspace}
\newcommand{\ALGgeneric}{\textsc{Alg}\xspace}
\newcommand{\ALG}{\textsc{GD}\xspace}
\newcommand{\ALGlong}{\textsc{Greedy Dual}\xspace}
\newcommand{\OPT}{\textsc{Opt}\xspace}
\newcommand{\TB}{\textsc{TB}\xspace}
\newcommand{\dist}{\textsf{dist}}
\newcommand{\pos}{\textsf{pos}}
\newcommand{\atime}{\textsf{atime}}
\newcommand{\Free}{\textsf{free}}
\newcommand{\Moat}{\ensuremath{\mathcal{A}}\xspace}
\newcommand{\primal}{\ensuremath{\mathcal{P}}\xspace}
\newcommand{\dual}{\ensuremath{\mathcal{D}}\xspace}
\newcommand{\wait}{\textsf{wait}}
\newcommand{\REAL}{\mathbb{R}}
\newcommand{\X}{\mathcal{X}}
\newcommand{\sur}{\textsf{sur}}
\newcommand{\sign}{\textsf{sgn}\xspace}
\newcommand{\act}{G}
\newcommand{\dv}[1]{\mathrm{d}#1}
\newcommand{\optcost}[1]{\textsf{opt-cost}(#1)} 
\newcommand{\x}[1]{x_{#1}} 
\newcommand{\y}[2]{y_{#1}(#2)}
\theoremstyle{plain}
\newtheorem{observation}[theorem]{Observation}
\title{A Primal-Dual Online Deterministic Algorithm for Matching with Delays}
\author{Marcin Bienkowski}{Institute of Computer Science, University of Wrocław, Poland}{marcin.bienkowski@cs.uni.wroc.pl}{https://orcid.org/0000-0002-2453-7772}{}
\author{Artur Kraska}{Institute of Computer Science, University of Wrocław, Poland}{artur.kraska@cs.uni.wroc.pl}{}{}
\author{Hsiang-Hsuan Liu}{Institute of Computer Science, University of Wrocław, Poland}{alison.hhliu@cs.uni.wroc.pl}{https://orcid.org/0000-0002-0194-9360}{}
\author{Pawe{\l} Schmidt}{Institute of Computer Science, University of Wrocław, Poland}{pawel.schmidt@cs.uni.wroc.pl}{}{}
\authorrunning{M. Bienkowski, A. Kraska, H. Liu and P. Schmidt}
\subjclass{\ccsdesc[500]{Theory of computation~Online algorithms}}
\keywords{online algorithms, delayed service, metric matching, primal-dual algorithms, competitive analysis}
\begin{document}

\maketitle

\begin{abstract}
In the Min-cost Perfect Matching with Delays (MPMD) problem, $2 m$ requests arrive
over time at points of a metric space. An online algorithm has to 
connect these requests in pairs, but a~decision to match may be 
postponed till a more suitable matching pair is found. The goal is to minimize the joint 
cost of connection and the total waiting time of~all~requests. 

We present an $O(m)$-competitive deterministic algorithm for this problem,
improving on an~existing bound of $O(m^{\log_2{5.5}}) = O(m^{2.46})$. Our
algorithm also solves (with the same competitive ratio) a bipartite variant of~MPMD, 
where requests are either positive or negative and only requests with
different polarities may be matched with each other. Unlike the existing
randomized solutions, our approach does not depend on~the~size of the metric
space and does not have to know it in advance.

\keywords{Online algorithms, delayed service, metric matching, primal-dual algorithms, competitive analysis.}
\end{abstract}

\section{Introduction}

Consider a gaming platform that hosts two-player games, such as chess, go or
Scrabble, where participants are joining in real time, each wanting to play against
another human player. The system matches players according to their known
capabilities aiming at minimizing their dissimilarities: any player wants to compete
against an opponent with comparable skills. A better match for a~player can be found 
if the platform delays matching decisions as meanwhile more
appropriate opponents may join the system. However, an excessive delay may
also degrade the quality of experience. Therefore, a matching mechanism that
runs on a gaming platform has to balance two conflicting objectives: to
minimize the waiting time of any player and to minimize dissimilarities
between matched players.

The problem informally described above, called \emph{Min-cost Perfect Matching
with Delays (MPMD)}, has been recently introduced by Emek et
al.~\cite{EmKuWa16}. The problem is inherently online\footnote{The offline
variant of the problem, where all player arrivals are known a priori, can be
easily solved in polynomial time.}: a matching algorithm for the gaming platform
has to react in real time, without knowledge about future requests
(player arrivals) and make its decision irrevocably: once two requests
(players) are paired, they remain paired forever.

The MPMD problem was also considered in a \emph{bipartite} variant, called
\emph{Min-cost Bipartite Perfect Matching with Delays (MBPMD)} introduced by
Ashlagi et al.~\cite{AACCGK17}. There requests have polarities: one half of
them is positive, and the other half is negative. An algorithm may match only
requests of different signs. This setting corresponds to a variety of
real-life scenarios, e.g., assigning drivers to passengers on ride-sharing
platforms or matching patients to donors in kidney transplants. Similarly to
the MPMD problem, there is a trade-off between minimizing the waiting time and
finding a better match (a closer driver or a more compatible donor).


\subsection{Problem Definition} 

Formally, both in the MPMD and MBPMD problems, there is a metric space~$\X$
equipped with a distance function $\dist: \X \times \X \to \REAL_{\geq 0}$,
both known in advance to an online algorithm. An~online part of the input is a
sequence of $2 m$ requests $u_1, u_2, \ldots, u_{2m}$. A request 
(e.g.,~a~player arrival) $u$ is a triple $u = (\pos(u), \atime(u), \sign(u))$, where
$\atime(u)$ is the arrival time of request~$u$, $\pos(u) \in \X$ is the
request location, and $\sign(u)$ is the polarity of the request.

In the bipartite case, half of the requests are positive and $\sign(u) = +1$ for
any such request $u$; the remaining half are negative and there $\sign(u) = -1$.
In the non-bipartite case, requests do not have polarities, but for technical
convenience we set $\sign(u) = 0$ for any request $u$.

In applications described above, the function $\dist$ measures the
dissimilarity of a given pair of requests (e.g., discrepancy between player
capabilities in the gaming platform scenario or the physical distance between
a driver and a~passenger in the ride-sharing platform scenario). For instance,
for chess, a player is commonly characterized by her Elo rating
(an integer)~\cite{Elo78}. In such case, $\X$ may be simply a set of all
integers with the distance between two points defined as the difference of
their values.

Requests arrive over time, i.e., $\atime(u_1) \leq \atime(u_2) \leq \dots
\leq \atime(u_{2m})$. We note that the integer $m$ is not known beforehand to an
online algorithm. At any time $\tau$, an online algorithm may match a pair of
requests (players) $u$ and $v$ that 
\begin{itemize}
\item have already arrived ($\tau \geq \atime(u)$ and $\tau \geq \atime(v)$), 
\item have not been matched yet,
\item satisfy $\sign(u) = -\sign(v)$ (i.e., have opposite polarities in the bipartite case; in the 
non-bipartite case, this condition trivially holds for any pair).
\end{itemize}
The cost incurred by such \emph{matching edge} is then $\dist(\pos(u),\pos(v))
+ (\tau - \atime(u)) + (\tau - \atime(v))$. That is, it is the sum of
  the \emph{connection cost} defined as $\dist(\pos(u),\pos(v))$ and the
  \emph{waiting costs} of $u$ and $v$, defined as $\tau -
\atime(u)$ and $\tau - \atime(v)$, respectively.

The goal is to eventually match all requests and minimize the total cost of
all matching edges. We perform worst-case analysis, assuming that the
requests are given by an adversary. To measure the performance of an online
algorithm \ALGgeneric for an input instance $\I$, we compare its cost
$\ALGgeneric(\I)$ to the cost $\OPT(\I)$ of an optimal offline solution {\OPT}
that knows the entire input sequence in advance. The objective is to minimize
the competitive ratio~\cite{BorEl-98} defined as $\sup_\I \{ \ALGgeneric(\I) /
\OPT(\I)\}$.


\subsection{Previous Work}

The MPMD problem was introduced by Emek et al.~\cite{EmKuWa16}, who presented
a~\emph{randomized} $O(\log^2 n + \log \Delta)$-competitive algorithm. There,
$n$ is the number of points in the metric space $\X$ and $\Delta$ is its
aspect ratio (the ratio between the largest and the smallest distance
in~$\X$). The competitive ratio was subsequently improved by 
Azar~et~al.~\cite{AzChKa17} to $O(\log n)$. They also showed that the competitive
ratio of any randomized algorithm is at least $\Omega(\sqrt{\log n})$. The
currently best lower bound of $\Omega(\log n / \log \log n)$ for randomized
solutions was given by Ashlagi~et~al.~\cite{AACCGK17}.

Ashlagi et al.~\cite{AACCGK17} adapted the algorithm of Azar et
al.~\cite{AzChKa17} to the bipartite setting and obtained a randomized $O(\log
n)$-competitive algorithm for this variant. The currently best lower
bound of $\Omega(\sqrt {\log n / \log \log n})$ for this variant was also
given in \cite{AACCGK17}.

Both lower bounds use $O(n)$ requests. Therefore, they 
imply that no randomized algorithm can achieve a competitive ratio lower than
$\Omega(\log m / \log \log m)$ in the non-bipartite case and lower than 
$\Omega(\sqrt {\log m / \log \log m})$ in the bipartite one. (Recall that $2m$
is the number of requests in the input.)

The status of the achievable performance of \emph{deterministic} solutions
is far from being resolved. No better lower bounds than the ones used for
randomized settings are known for deterministic algorithms. 
The first solution that worked for general metric spaces was given by
Bienkowski et al.~and achieved an embarrassingly high competitive ratio of
$O(m^{\log_2{5.5}}) = O(m^{2.46})$~\cite{BiKrSc17}. Roughly speaking, their
algorithm is based on growing spheres around not-yet-paired
requests. Each sphere is created upon a request arrival, grows with time,
and when two spheres touch, the corresponding requests become matched.

Concurrently and independently of our current paper, Azar and
Jacob-Fanani~\cite{AzaJac18} improved the deterministic ratio to
$O((1/\varepsilon) \cdot m^{\log(3/2 + \varepsilon)})$, where $\varepsilon > 0$
is a parameter of their algorithm. When $\varepsilon$ is small enough, this
ratio becomes $O(m^{0.59})$. Their approach is similar to that
of~\cite{BiKrSc17}, but they grow spheres in a smarter way: slower than time
progresses and only in the negative direction of time axis.

Better deterministic algorithms are known only for simple spaces: 
Azar~et~al.~\cite{AzChKa17} gave an~$O(\textnormal{height})$-competitive algorithm for
trees and Emek et al.~\cite{EmShWa17} constructed a $3$-competitive
deterministic solution for two-point metrics (the latter competitive ratio is
best possible).


\subsection{Our Contribution}

In this paper, we focus on deterministic solutions for both the MPMD and MBPMD
problems, i.e., for both the non-bipartite and the bipartite variants of the
problem. We present a simple $O(m)$-competitive LP-based algorithm that works in
both settings. 

In contrast to the previous randomized solutions to these
problems~\cite{AACCGK17,AzChKa17,EmKuWa16}, and similarly to other
deterministic solutions~\cite{AzaJac18,BiKrSc17}, we do not need the metric
space $\X$ to be finite and known in advance by an online algorithm. (All
previous randomized solutions started by approximating $\X$ by a random HST
(hierarchically separated tree)~\cite{FaRaTa04} or a random HST tree with
reduced height~\cite{BaBuMN15}.) This approach, which can be performed only in
the randomized setting, greatly simplifies the task as the underlying tree
metric reveals a lot of structural information about the cuts between points
of $\X$ and hence about the structure of an optimal solution. In the
deterministic setting, such information has to be gradually learned as time
passes. For our algorithm, we require only that, together with any request
$u$, it learns the distances from $u$ to all previous requests.

In contrast to the previous deterministic algorithms~\cite{AzaJac18,BiKrSc17},
we base our algorithm on the moat-growing framework, developed originally for
(offline) constrained connectivity problems (e.g., for Steiner problems) by
Goemans and Williamson~\cite{GoeWil95}. Glossing over a lot of details, in
this framework, one writes a~primal linear relaxation of the problem and its
dual. The primal program has a constraint (connectivity requirement) for any
subset of requests and the dual program has a variable for any such subset.
The algorithm maintains a family of active sets, which are initially
singletons. In the runtime, dual variables are increased simultaneously, till
some dual constraint (corresponding to a pair of requests) becomes tight: in
such case an algorithm connects such pair and merges the corresponding sets.
At the end, an algorithm usually performs pruning by removing redundant edges.

When one tries to adapt the moat-growing framework to online setting, the main
difficulty stems from the irrevocability of the pairing decision: the pruning
operation performed at~the end is no longer an option. Another difficulty is
that an algorithm has to combine the concept of \emph{actual} time that passes
in an online instance with the \emph{virtual} time that dictates the growth
of dual variables. In particular, dual variables may only start to grow once
an online algorithm learns about the request and not from the very beginning
as they would do in the offline setting. Finally, requests appear online, and
hence both primal and dual programs evolve in time. For instance, this means
that for badly defined algorithms, appearing dual constraints may be violated
already once they are introduced.

We note that $2 m$ (the number of requests) is incomparable with $n$ (the number
of different points in the metric space $\X$) and their relation depends on
the application. Our algorithm is better suited for applications, where 
$\X$ is infinite or virtually infinite (e.g., it corresponds to an Euclidean plane
or a city map for ride-sharing platforms~\cite{LoVaJa16}) or very large (e.g.,
for some real-time online games, where player capabilities are represented as
multi-dimensional vectors describing their rank, reflex, offensive and
defensive skills, etc.~\cite{AvSpZa13}).


\subsection{Alternative Deterministic Approaches (That Fail)}

A few standard deterministic approaches fail when applied to 
the MPMD and MBPMD problems. One such attempt is the \emph{doubling technique} (see,
e.g.,~\cite{ChrKen06}):  an online algorithm may trace the cost of an optimal
solution \OPT and perform a global operation (e.g., match many pending
requests) once the cost of \OPT increases significantly (e.g., by a factor of
two) since the last time when such global operation was performed. 
This approach does not seem to be feasible here as the total cost of \OPT may
\emph{decrease} when new requests appear.

Another attempt is to observe that the randomized algorithm by 
Azar~et~al.~\cite{AzChKa17} is a~deterministic algorithm run on a random tree
that approximates the original metric space.  One may try to replace a random
tree by a deterministically generated tree that spans requested points of the metric 
space. Such spanning tree can be computed by 
the standard greedy routine for the online Steiner tree problem~\cite{ImaWax91}.
However, it turns out that the competitive ratio of the resulting algorithm
is $2^{\Omega(m)}$.  (The main reason is that the adversary may give an
initial subsequence that forces the algorithm to create a spanning tree with
the worst-case stretch of $2^{\Omega(m)}$ and such initial subsequence can be served
by \OPT with a~negligible cost. The details are given in
Appendix~\ref{sec:steiner}.)


\subsection{Related Work}

Originally, online metric matching problems have been studied in variants where
delaying decisions was not permitted. In this variant,
$m$ requests with positive polarities are given at the beginning to
an algorithm. Afterwards, $m$ requests with negative polarities are
presented one by one to an algorithm and they have to be matched
\emph{immediately} to existing positive requests. The goal is
to minimize the weight of a perfect matching created by the algorithm. For
general metric spaces, the best deterministic
algorithms achieve the optimal competitive ratio of $2m-1$~\cite{KalPru93,KhMiVa94,Raghve16}
and the best randomized solution is $O(\log^2 m)$-competitive~\cite{BaBuGN14,MeNaPo06}.
Better bounds are known for line metrics~\cite{AnBNPS14,FuHoKe05,GupLew12,KouNan03}: here the 
best deterministic algorithm is $O(\log^2 m)$-competitive~\cite{NayRag17}
and the best randomized one achieves the ratio of $O(\log m)$~\cite{GupLew12}.

Another strand of research concerning online matching problems arose around a
non-metric setting where points with different polarities are connected by
graph edges and the goal is to maximize the cardinality or the weight of the
produced matching. For a comprehensive overview of these type of problems we
refer the reader to a recent survey by Mehta~\cite{Meht13}.

The M(B)PMD problem is an instance in a broader category of problems, where an
online algorithm may delay its decisions, but such delays come with a~certain
cost. Similar trade-offs were employed in other areas of online analysis: in
aggregating orders in supply-chain management
\cite{BBBCDF16,BBCJNS14,BBCJSS13,BuFeNT17,BuKLMS08}, aggregating messages in
computer networks~\cite{DoGoSc01,KaKeRa03,KhNaRa02}, or recently for server
problems~\cite{AzGaGP17,BiKrSc18}.

\section{Primal-Dual Formulation}

We start with introducing a linear program that allows us to lower-bound the
cost of an~optimal solution. To this end, fix an instance \I of M(B)PMD. Let
\R be the set of all requests. We call any unordered pair of different
requests in \I an edge; let \E be the set of all edges that correspond to
potential matching pairs, i.e., the set of all edges in the non-bipartite
case, and the edges that connect requests of opposite polarities in the
bipartite variant. For each set $S\subseteq \R$, by~$\delta(S)$ we denote the
set of all edges from \E crossing the boundary of $S$, i.e., having exactly
one endpoint in $S$.

For any set $S\subseteq \R$, we define $\sur(S)$ (\emph{surplus} of set $S$)
as the number of unmatched requests in a maximum cardinality matching of
requests within set $S$. 
\begin{itemize}
\item
In the non-bipartite variant (MPMD), we are allowed to match any two requests.
Hence, if $S$ is of even size, then $\sur(S) = 0$. Otherwise, $\sur(S) = 1$ as
in any maximum cardinality matching of requests within $S$ exactly one request
remains unmatched.
\item
In the bipartite variant (MBPMD), we can always match two requests of
different polarities. Hence, the surplus of a set $S$ is the discrepancy
between the number of positive and negative requests inside $S$, i.e.,
$\sur(S) = |\sum_{u\in S} \sign(u)|$.
\end{itemize}

To describe a matching, we use the following notation. For each edge $e$, we
introduce a binary variable $x_e$, such that $x_e = 1$ if and only if $e$ is a
matching edge. For any set $S\subseteq \R$ and any feasible matching (in
particular the optimal one), it holds that $\sum_{e \in \delta(S)} x_e \geq
\sur(S)$.

Fix an optimal solution \OPT for \I. If a pair of requests $e = (u,v)$ is
matched by \OPT, it is matched as soon as both $u$ and $v$
arrive, and hence the cost of matching $u$ with $v$ in the solution of \OPT is
equal to $\optcost{e} := \dist(\pos(u),\pos(v)) + |\atime{(u)} -
\atime{(v)}|$. This, together with the preceding observations, motivates the
following linear program \primal:
\begin{align*}
	\text{minimize \quad} & \sum_{e \in \E} \optcost{e} \cdot \x{e} & \\
	\text{subject to \quad} & \sum_{e \in \delta(S)} \x{e} \geq \sur(S) & \quad \forall{S \subseteq \R} \\
	& \x{e} \geq 0 & \quad \forall e \in \E.
\end{align*}

As any matching is a feasible solution to \primal, the cost of the optimal
solution of \primal lower-bounds the cost of the optimal solution for instance
\I of M(B)PMD. Note that there might exist a feasible integral solution of
\primal that does not correspond to any matching. To exclude all such
solutions, we could add constraints $\sum_{e \in \delta(S)} \x{e} = 1$ for all
singleton sets $S$. The resulting linear program would then exactly  describe
the matching problem (cf. Chapter 25 of \cite{Schrij03}). However, our main
concern is not \primal, but its dual and its current shape is sufficient for
our purposes. The program \dual, dual to \primal, is then
\begin{align*}
	\text{maximize \quad} & \sum_{S\subseteq \R} \sur(S) \cdot y_S & \\
	\text{subject to \quad} & \sum_{S:e \in \delta(S)} y_S \leq \optcost{e} & \quad \forall{e \in \E} \\
	& y_S \geq 0  & \quad \forall{S}\subseteq \R.
\end{align*}

Note that in any solution, the dual variables $y_S$ corresponding to sets $S$ for which $\sur(S)=0$, can be set to $0$ without changing feasibility or objective value.

The following lemma is an immediate consequence of weak duality. 

\begin{lemma}
\label{lem:dual}
	Fix any instance \I of the M(B)PMD problem. Let $\OPT(\I)$ be the value of
	any optimal solution of \I and $D$ be the value of any feasible solution
	of~$\dual$. Then $\OPT(\I) \geq D$.
\end{lemma}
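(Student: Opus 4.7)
The plan is to exhibit a feasible integral solution of \primal whose objective value equals $\OPT(\I)$, and then invoke weak LP duality.

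First, I would fix an optimal offline solution for \I and, without loss of generality, assume that whenever it matches a pair $e = (u,v)$, it does so at time $\max\{\atime(u), \atime(v)\}$ (deferring the match strictly longer can only increase the waiting cost, so this is WLOG). The cost incurred by \OPT on this pair is then exactly $\dist(\pos(u),\pos(v)) + |\atime(u) - \atime(v)| = \optcost{e}$, so $\OPT(\I) = \sum_{e \in M} \optcost{e}$, where $M$ denotes the matching computed by \OPT.

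Next, I would define $\x{e} = 1$ if $e \in M$ and $\x{e} = 0$ otherwise, and verify that this is feasible for \primal. Nonnegativity is immediate. For the cut constraint, fix $S \subseteq \R$. Since $M$ is a perfect matching of~\R, at most $(|S| - \sur(S))/2$ edges of $M$ can lie entirely inside $S$ (the matching of requests inside $S$ induced by $M$ has cardinality at most that of a maximum matching within $S$, which leaves $\sur(S)$ requests unmatched inside and hence matched to partners outside $S$). Therefore $|\delta(S) \cap M| \geq \sur(S)$, i.e., $\sum_{e \in \delta(S)} \x{e} \geq \sur(S)$, as required. In the bipartite case the same bound holds because $\sur(S) = |\sum_{u \in S} \sign(u)|$ equals the minimum number of cross-boundary edges any perfect matching must use at $S$. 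With this choice, the objective of \primal equals $\sum_{e \in M} \optcost{e} = \OPT(\I)$.

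Finally, I would apply weak duality between \primal and \dual: for any feasible dual solution $y$ with value $D$ and any feasible primal solution $x$ with value $P$, we have $D \leq P$. Taking $x$ to be the solution constructed above yields $D \leq P = \OPT(\I)$, which is the claim. The only real subtlety is the ``WLOG'' step in which one assumes \OPT matches each pair as early as possible; this is the place I would be most careful, though it follows immediately from the observation that delaying a scheduled match only increases its waiting cost without affecting anything else.
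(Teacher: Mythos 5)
Your proof is correct and follows essentially the same route as the paper: show that the optimal matching, written as an integral $0/1$ vector $x_e = \mathds{1}[e \in M]$, is feasible for \primal with objective value $\OPT(\I)$, then invoke weak duality. The paper's proof of the lemma itself is a one-liner because it defers both the feasibility claim and the cost-equality claim to the prose preceding the LP (``any matching is a feasible solution to \primal'' and ``if a pair $e=(u,v)$ is matched by \OPT, it is matched as soon as both $u$ and $v$ arrive''), whereas you spell out those two steps in full --- your counting argument for $|\delta(S)\cap M|\geq \sur(S)$ and your WLOG about early matching are exactly the details the paper sketches before stating the LP.
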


\begin{proof}
	Let $P^*$ and $D^*$ be the values of optimal solutions for $\primal$ and
	$\dual$, respectively. Since any matching is a feasible solution for
	$\primal$, $\OPT(\I) \geq P^*$. Hence, $\OPT(\I) \geq P^* \geq D^* \geq
	D$.
\end{proof}

Lemma~\ref{lem:dual} motivates the following approach: We construct an
online algorithm \ALGlong (\ALG), which, along with its own
solution, maintains a feasible solution $D$ for \dual corresponding to the
already seen part of the input instance. This feasible dual solution not only
yields a lower bound on the cost of the optimal matching, but also plays a
crucial role in deciding which pair of requests should be matched.

Note that since the requests arrive in an online manner, \dual evolves in
time. When a~request arrives, the number of subsets of \R increases (more
precisely, it doubles), and hence more dual variables $y_S$ are introduced.
Moreover, the newly arrived request creates an edge with every existing request
and the corresponding dual constraints are introduced. Therefore, showing the
feasibility of the created dual solution is not immediate; we deal with
this issue in Section~\ref{sec:correctness}.

\section{Algorithm \ALGlong}

The high-level idea of our algorithm is as follows: \ALGlong (\ALG) resembles
moat-growing algorithms for solving constrained forest
problems~\cite{GoeWil95}. During its runtime, \ALG partitions all the requests
that have already arrived into \emph{active sets}.\footnote{A reader familiar
with the moat-growing algorithm may think that active sets are moats. However,
not all of them are growing in time.} If an active set contains any free
requests, we call this set \emph{growing}, and \emph{non-growing} otherwise. At
any time, for each active growing set $S$, the algorithm increases
continuously its dual variable $y_S$ until a constraint in \dual corresponding to
some edge $(u,v)$ becomes tight. When it happens, \ALG makes both active sets
(containing $u$ and $v$, respectively) inactive, and the set being their union
active. In addition, if this happened due to two growing sets,
\ALG matches as many pairs of free requests in these sets as possible: in the 
non-bipartite variant \ALG matches exactly one pair of free requests, while in
the bipartite variant, \ALG matches free requests of different polarities
until all remaining free requests have the same sign.


\subsection{Algorithm Description}

More precisely, at any time, \ALG partitions all requests that arrived until
that time into \emph{active} sets. It maintains mapping $\Moat$, which assigns
an active set to each such request. An~active set $S$, whose all requests are
matched is called \emph{non-growing}. Conversely, an active set $S$ is called
\emph{growing} if it contains at least one free request. \ALG ensures that the
number of free requests in an active set $S$ is always equal to $\sur(S)$. We
denote the set of free requests in an~active set $S$ by $\Free(S)$; if $S$ is
non-growing, then $\Free(S) = \emptyset$.

When a request $u$ arrives, the singleton $\{u\}$ becomes a new active and
growing set, i.e., $\Moat(u) = \{u\}$. The dual variables of all active
growing sets are increased continuously with the same rate in which time
passes. This increase takes place until a dual constraint between two
active sets becomes tight, i.e., until there exists at least one edge $e =
(u,v)$, such that
\begin{equation}
	\label{eq:cutEdge}
	 \Moat(u) \neq \Moat(v) \quad\text{ and } \sum_{S: e \in \delta(S)} y_S = \optcost{e}.
\end{equation}
In such case, while there exists an edge $e = (u,v)$ satisfying
\eqref{eq:cutEdge},
\ALG processes such edge in the following way. First, it \emph{merges} active 
sets $\Moat(u)$ and $\Moat(v)$. By merging we mean that the mapping \Moat is
adjusted to the new active set $S = \Moat(u)
\uplus \Moat(v)$ for each request of $S$. Old active sets $\Moat(u)$ and
$\Moat(v)$ become \emph{inactive}.\footnote{Note that \emph{inactive} is not
the opposite of being \emph{active}, but means that the set was active
previously: some sets are never active or inactive.} Second, as long as
there is a pair of free requests $u', v' \in S$ that can be matched with each
other, \ALG matches them.

In the non-bipartite variant, \ALG matches at most one pair as each
active set contains at most one free request. In the bipartite variant, \ALG
matches pairs of free requests until all unmatched requests in $S$ (possibly zero) have the
same polarity. Observe that in either case, the number of free requests
after merge is equal to $\sur(S)$. Finally, \ALG \emph{marks} edge $e$. Marked
edges are used in the analysis, to find a proper charging of the
connection cost to the cost of the produced solution for
\dual. The pseudocode of \ALG is given in Algorithm~\ref{alg:the-algorithm2}
and an example execution that shows a partition of requests into active sets is
given in Figure~\ref{fig:clustering}.


\algdef{SE}[SUBALG]{Indent}{EndIndent}{}{\algorithmicend\ }%
\algtext*{Indent}
\algtext*{EndIndent}
\begin{algorithm}[t]
	\caption{Algorithm \ALGlong}
		\label{alg:the-algorithm2}
	\begin{algorithmic}[1]
		\State{\textbf{Request arrival event:}} 
			\Indent
			\If{a request $u$ arrives}
				\State{$\Moat(u) \gets \{u\}$}
				\ForAll{sets $S$ such that $u\in S$}
					\State{$y_S \gets 0$}
					\Comment{\emph{initialize dual variables for sets containing $u$}} 
				\EndFor
			\EndIf
			\EndIndent
		\State{}
		\State{\textbf{Tight constraint event:}}
		\Indent
			\While{exists a tight dual constraint for edge $e = (u,v)$ where $\Moat(u) \neq \Moat(v)$}
				\label{alg:tightConstraint}
				\State{$S\gets \Moat(u) \uplus \Moat(v)$ } 
					\Comment{\emph{merge two active sets}}
				\ForAll{$w\in S$}	
					\Comment{\emph{adjust assignment $\Moat$ for the new active set $S$}}
					\State{$\Moat(w) \gets S$}
				\EndFor	
				\State{\textbf{mark} edge $e$}
				\label{alg:markEdge}
				\While{there are $u', v' \in \Free(S)$ such that $\sign(u') = -\sign(v')$}
					\State{\textbf{match} $u'$ with $v'$}
						\Comment{\emph{match as many pairs as possible}}
				\EndWhile
			\EndWhile
		\EndIndent
		\State{}
		\State{\textbf{None of the above events occurs:}}
		\Indent
			\ForAll{growing active sets $S$}
				\State{increase continuously $y_S$ with the same rate in which time passes}
			\EndFor
		\EndIndent
	\end{algorithmic}
\end{algorithm}


\begin{figure}[t]
	\centering
	\includegraphics[width=\textwidth]{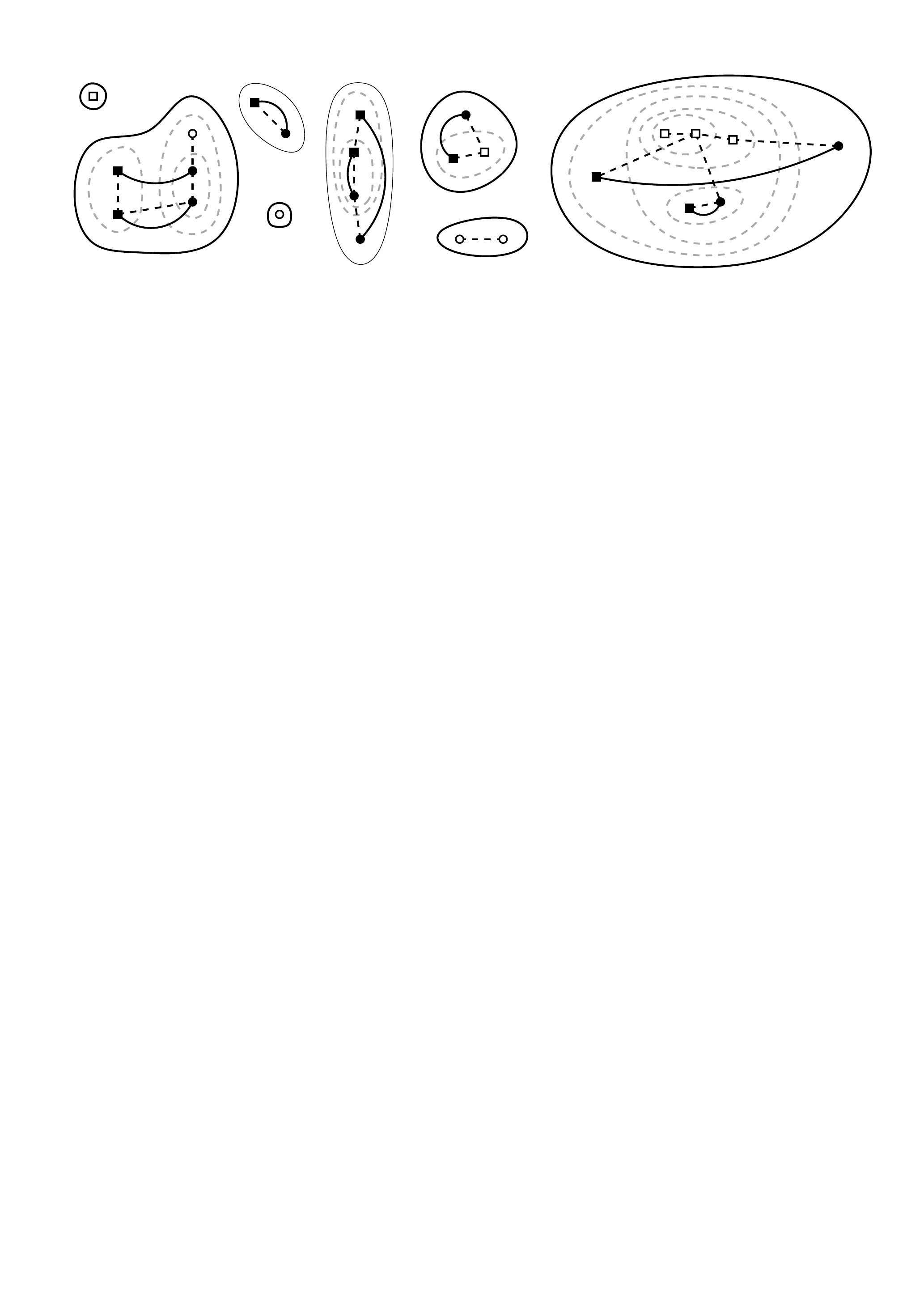}
	\caption{A partition of requests into active sets created by \ALG.
	Different polarities of requests are represented by discs and squares. Free
	requests are depicted as empty discs and squares, matched requests by
	filled ones. Active growing sets have bold boundaries and each of them
	contains at least one free request. Active non-growing sets contain only matched
	requests. Dashed lines represent marked edges and solid curvy lines
	represent matching edges. Dashed gray sets are already inactive; the
	inactive singleton sets have been omitted.}
	\label{fig:clustering} 
\end{figure}


\subsection{\ALGlong Properties}

It is instructive to trace how the set $\Moat(u)$ changes in time for a
request $u$. At the beginning, when $u$ arrives, $\Moat(u)$ is just the
singleton set $\{u\}$. Then, the set~$\Moat(u)$ is merged at least once with
another active set. If $\Moat(u)$ is merged with a non-growing set, the number
of requests in $\Moat(u)$ increases, but its surplus remains intact. After
$\Moat(u)$ is merged with a growing set, some requests inside the new
$\Moat(u)$ may become matched. It is possible that, in effect, the surplus of
the new set $\Moat(u)$ is zero, in which case the new set $\Moat(u)$ is non-growing.
(In the non-bipartite variant, this is always the case when two growing sets
merge.) After $\Moat(u)$ becomes non-growing, another growing set may
be merged with $\Moat(u)$, and so on. Thus, the set $\Moat(u)$ can change its
state from growing to non-growing (and back) multiple times.

The next observation summarizes the process described above, listing
properties of \ALG that we use later in our proofs.

\begin{observation}
\label{obs:alg_properties}
The following properties hold during the runtime of \ALG.
\begin{enumerate}
	\item 
		For a request $u$, when time passes, $\Moat(u)$ refers to different
		active sets that contain $u$.
	\item
		\label{item:one_to_one}
		At any time, every request is contained in exactly one active set. If this request
		is free, then the active set is growing.
	\item 
		\label{item:friseur}
		At any time, an active set $S$ contains exactly $\sur(S)$ free requests.
	\item
		\label{item:laminar}
		Active and inactive sets together constitute a laminar family of sets.
	\item 
		\label{item:once-in-love-always-in-love}
		For any two requests $u$ and $v$, once $\Moat(u)$ becomes equal to
		$\Moat(v)$, they will be equal forever.
\end{enumerate}
\end{observation}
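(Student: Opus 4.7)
The plan is to prove the five statements simultaneously by induction on the sequence of events that change the active-set structure, namely request arrivals and the individual merges executed inside the tight-constraint while loop. Before any request arrives, there are no active or inactive sets and every statement holds vacuously; the work therefore reduces to showing that each type of event preserves the invariants.

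For an arrival of a request $u$, the algorithm inserts the fresh singleton $\{u\}$ as a new active growing set and leaves everything else untouched. Since $u$ did not belong to any earlier set, laminarity is preserved and $\{u\}$ is disjoint from all other active sets, which handles Properties~\ref{item:laminar} and~\ref{item:one_to_one}. Property~\ref{item:friseur} holds because $\sur(\{u\}) = 1$ in both variants, matching the single free request $u$. Properties~1 and~\ref{item:once-in-love-always-in-love} are unaffected.

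For a single merge event, two distinct active sets $S_1 = \Moat(u)$ and $S_2 = \Moat(v)$ are replaced by $S = S_1 \uplus S_2$, after which a maximum number of free pairs inside $S$ is matched. Property~1 is just a restatement of how $\Moat$ is updated. Property~\ref{item:once-in-love-always-in-love} follows from the fact that the algorithm only merges, never splits, active sets. For Property~\ref{item:laminar}, $S_1$ and $S_2$ become inactive proper subsets of the new active set $S$, while every other existing set is, by the inductive hypothesis, either disjoint from $S_1 \cup S_2$ or already contained in one of them. Property~\ref{item:one_to_one} follows because every request formerly in $S_1 \cup S_2$ now lies in the single active set $S$, and $S$ is declared growing exactly when the pairing step leaves at least one free request.

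The only invariant requiring real work is Property~\ref{item:friseur}, i.e.\ that $|\Free(S)| = \sur(S)$ after the merge. By the inductive hypothesis, $|\Free(S_i)| = \sur(S_i)$ for $i = 1,2$ just before the merge. In the non-bipartite case, $\sur(S)$ equals $1$ when $|S_1| + |S_2|$ is odd and $0$ otherwise, so matching exactly one pair precisely when both $S_i$ are growing yields $|\Free(S)| = \sur(S)$. In the bipartite case, set $\sigma_i := \sum_{w \in S_i} \sign(w)$, so that $\sur(S_i) = |\sigma_i|$ and $\sur(S) = |\sigma_1 + \sigma_2|$; the pairing step matches $\min(|\sigma_1|,|\sigma_2|)$ opposite-sign free requests when $\sigma_1, \sigma_2$ have opposite signs and none otherwise, leaving precisely $|\sigma_1 + \sigma_2|$ free requests in $S$. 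This sign-arithmetic bookkeeping in the bipartite variant is the main (and essentially only) obstacle; once it is settled, treating each iteration of the while loop as one atomic merge event closes the induction.
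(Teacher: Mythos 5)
The paper does not actually supply a proof of this observation: it is preceded only by an informal discussion of how $\Moat(u)$ evolves and is then asserted as following directly from the algorithm's definition. Your induction on events (arrivals and individual merges) is the natural formalization of exactly that discussion, and the case analysis for Property~\ref{item:friseur} — the only property with real content — is correct in both variants. One small point that deserves to be made explicit in your laminarity step: from laminarity alone, a set $T$ could satisfy $T \supsetneq S_1$ while being disjoint from $S_2$, which would not fit your dichotomy ``disjoint from $S_1 \cup S_2$ or contained in one of them.'' You implicitly rely on the additional fact that no active or inactive set strictly contains an \emph{active} set (equivalently, active sets are the maximal elements of the laminar family); this follows from Property~\ref{item:one_to_one} for active $T$ and from Property~\ref{item:once-in-love-always-in-love} together with the monotone growth of $\Moat(\cdot)$ for inactive $T$, but it is worth stating, since ``by the inductive hypothesis'' suggests laminarity alone suffices.
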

 
\section{Correctness} 
\label{sec:correctness}

We now prove that \ALGlong is defined properly. In other words, we show that
the dual values maintained by $\ALG$ always form a feasible solution
of $\dual$ (Lemma~\ref{lem:feasible_dual}) and $\ALG$ returns a feasible
matching of all requests at the end (Lemma~\ref{lem:proper_matched}). From
now on, we denote the values of a dual variable $y_S$ at time $\tau$ by
$y_S(\tau)$.

By the definition, the waiting cost of a request is the time difference
between the time it arrives and the time it is matched. In the following
lemma, we relate the waiting cost of a request to the dual variables for the
active sets it belongs to.

\begin{lemma}
\label{lem:request_wc_ub}
Fix any request $u$. For any time $\tau \geq \atime(u)$, it holds that
$$\sum_{S:u\in S} y_S(\tau) \leq \tau-\atime(u).$$ The relation holds with
equality if $u$ is free at time $\tau$.
\end{lemma}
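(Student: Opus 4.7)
The plan is to exploit two structural features of \ALG: first, that a dual variable $y_S$ grows only while $S$ is currently an active growing set, and then at unit rate; second, that at every moment of $[\atime(u), \tau]$ the request $u$ lies in exactly one active set, namely the current $\Moat(u)$ (by item~\ref{item:one_to_one} of Observation~\ref{obs:alg_properties}).

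First I would observe that $y_S(\atime(u)) = 0$ for every $S$ containing $u$: before $\atime(u)$ the request does not exist, and upon its arrival the algorithm explicitly initializes $y_S \gets 0$ for every set containing $u$. After $\atime(u)$, $y_S$ may increase only while $S$ is active and growing, and an inactive set's dual variable is frozen thereafter. Consequently, $\sum_{S : u \in S} y_S(\tau)$ equals the total length of those sub-intervals of $[\atime(u), \tau]$ during which the then-current set $\Moat(u)$ is growing. This integral is trivially at most $\tau - \atime(u)$, which gives the inequality in the statement.

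For the equality clause, I would invoke the fact that once a request is matched it stays matched (an immediate consequence of the irrevocability of pairings in \ALG). Hence, if $u$ is free at time $\tau$, then $u$ is free throughout $[\atime(u), \tau]$. By item~\ref{item:friseur} of Observation~\ref{obs:alg_properties}, during every such moment the set $\Moat(u)$ has a nonempty free subset (it contains $u$), so it is growing by definition; thus the integral above equals $\tau - \atime(u)$ exactly.

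The main subtlety to verify is that the sum does not double-count time and that it really coincides with the described integral. This is where I expect the bookkeeping to require care: I need to argue that, for the request $u$, the sets $S \ni u$ with $y_S > 0$ form a chain of successive values of $\Moat(u)$ (cf.\ the ``instructive trace'' preceding Observation~\ref{obs:alg_properties}), and that these sets have pairwise disjoint active-and-growing intervals, each interval ending at the merge that creates the next element of the chain. Once this laminar/chain structure (item~\ref{item:laminar}) is made explicit, the accumulated dual mass decomposes cleanly into the sum of lengths of ``growing'' intervals, from which both the inequality and its equality case follow.
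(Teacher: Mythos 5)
Your proof is correct and takes essentially the same route as the paper: both arguments rest on the same three facts, namely that $y_S$ is initialized to $0$ when $u$ arrives, that it grows at unit rate only while $S$ is the active growing set, and that at any instant exactly one active set (the current $\Moat(u)$) contains $u$, which is growing whenever $u$ is free (Property~\ref{item:one_to_one} of Observation~\ref{obs:alg_properties} — the more direct citation than Property~\ref{item:friseur}). The paper packages this as a differential invariant-preservation argument (the sum grows by at most $\varepsilon$, exactly $\varepsilon$ when $u$ is free, over any event-free interval of length $\varepsilon$) while you aggregate it into a single integral over $[\atime(u),\tau]$, but the decomposition and supporting facts are identical.
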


\begin{proof}
We show that the inequality is preserved as time passes. At time $\tau =
\atime(u)$, request $u$ is introduced and sets $S$ containing $u$ appear.
Their $y_S$ values are initialized to $0$. Therefore, at that time,
$\sum_{S:u\in S} y_S(\tau) = 0$ as desired.

Whenever a merging event or an arrival of any other requests occur, new
variables $y_S$ may appear in the sum $\sum_{S:u\in S} y_S(\tau)$, but, at
these times, the values of these variables are equal to zero, and therefore do
not change the sum value.

It remains to analyze the case when time passes infinitesimally by
$\varepsilon$ and no event occurs within this period. It is sufficient to
argue that the sum $\sum_{S:u\in S} y_S(\tau)$ increases exactly by
$\varepsilon$ if $u$ is free at $\tau$ and at most by $\varepsilon$ otherwise.
Recall that $y_S$ may grow only if $S$ is an active growing set. By
Property~\ref{item:one_to_one} of Observation~\ref{obs:alg_properties},
the only active set containing $u$ is $\Moat(u)$. This set is growing if $u$
is free (and then $y_{\Moat(u)}$ increases exactly by $\varepsilon$) and may
be growing or non-growing if $u$ is matched (and then $y_{\Moat(u)}$ increases
at most by $\varepsilon$).
\end{proof}

The following lemma shows that throughout its runtime, $\ALG$ maintains 
a~feasible dual solution.

\begin{lemma}
\label{lem:feasible_dual}
At any time, the values $y_S$ maintained by the algorithm constitute 
a~feasible solution to \dual.
\end{lemma}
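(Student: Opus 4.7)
The plan is to prove the invariant $\sum_{S:\,e \in \delta(S)} y_S \le \optcost{e}$ for every edge $e$ by induction over the events that modify \ALG's state. Nonnegativity of $y_S$ is automatic, since every variable is born at $0$ and \ALG never decreases a dual variable. Before any request arrives, the claim is vacuous; thereafter there are exactly three event types to check: a continuous-growth interval, a tight-constraint event (merge), and a request arrival.

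For a continuous-growth interval, \ALG raises $y_S$ at unit rate only for active growing sets. Because the growth is continuous, the left-hand side of any edge constraint would have to pass through equality with $\optcost{e}$ before exceeding it; at that instant the tight-constraint event on line~\ref{alg:tightConstraint} fires and all further growth is suspended, so no constraint is ever strictly violated. For a tight-constraint event, the merge only rewires \Moat and introduces a fresh variable $y_S = 0$ for the new active set; it does not change any existing $y_S$ value and does not create any new edges, so every edge constraint keeps its pre-merge left-hand side and hence remains satisfied.

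The substantive case is a request arrival at time $\tau = \atime(u)$. The newly introduced variables $y_S$ for $S \ni u$ are initialized to $0$, so constraints for edges that already existed are unaffected. However, a new edge $e = (u,v)$ is born for every prior request $v$, and its constraint must be certified feasible on the spot. Any set $S$ with $e \in \delta(S)$ satisfies either $u \in S, v \notin S$ (contributing $0$, since every $y_S$ with $u \in S$ was just initialized) or $v \in S, u \notin S$. Hence $\sum_{S:\,e\in\delta(S)} y_S(\tau) \le \sum_{S:\,v\in S} y_S(\tau) \le \tau - \atime(v)$ by Lemma~\ref{lem:request_wc_ub} applied to $v$, and this is at most $\dist(\pos(u),\pos(v)) + \atime(u) - \atime(v) = \optcost{e}$, which is exactly the constraint.

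The anticipated obstacle is precisely this request-arrival step: it is the only case where brand-new constraints must be shown feasible at birth, and the only place where Lemma~\ref{lem:request_wc_ub} does genuine work. The other two cases fall out of \ALG's control flow, so the whole argument reduces to combining that lemma with the triangle-type bound $\atime(u) - \atime(v) \le \optcost{e}$.
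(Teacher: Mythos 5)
Your decomposition into three event types and your handling of the request-arrival case match the paper exactly (including the appeal to Lemma~\ref{lem:request_wc_ub} and the bound $\tau - \atime(v) \le \optcost{e}$). However, the continuous-growth case has a real gap. Your argument is: by continuity, a constraint would have to become tight before being violated, and ``at that instant the tight-constraint event fires and all further growth is suspended.'' This only explains what happens at the instant of tightness; it does not explain why the constraint for $e=(u,v)$ remains satisfied \emph{after} the merge, once growth resumes. The crucial point is that the tight-constraint event only fires when $\Moat(u) \neq \Moat(v)$. After the merge you have $\Moat(u) = \Moat(v)$, the constraint for $e$ sits exactly at $\optcost{e}$, and the event will never fire for $e$ again — so your ``continuity plus event firing'' argument no longer protects it.

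What is missing is the paper's observation that once $\Moat(u) = \Moat(v)$, by Property~\ref{item:once-in-love-always-in-love} of Observation~\ref{obs:alg_properties} this holds forever, and by Property~\ref{item:one_to_one} every request lies in exactly one active set; hence from that moment on no active set $S$ (and in particular no growing one) satisfies $e \in \delta(S)$. Consequently $\sum_{S: e \in \delta(S)} y_S$ is frozen at $\optcost{e}$ and never grows again. You need to add this step to close the induction: the continuous-growth argument you gave covers edges whose endpoints lie in different active sets, and the frozen-sum argument covers edges whose endpoints have already been merged into the same active set.
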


\begin{proof}
We show that no dual constraint is ever violated during the execution of~\ALG.

When a new request $u$ arrives at time $\tau=\atime(u)$, new sets
containing $u$ appear and the dual variables $y_S$ corresponding to these sets
are initialized to $0$. 

Each already existing constraint, corresponding to an edge $e$ not incident to~$u$, 
is modified: new $y_S$ variables for sets $S$ containing both $u$ and
exactly one of endpoints of $e$ appear in the sum. However, all these variables
are zero, and hence the feasibility of such constraints is preserved.

Moreover, for any edge $e = (u, v)$ where $v$ is an
existing request, a new dual constraint for this edge appears in \dual. We
show that it is not violated, i.e., $\sum_{S: e \in \delta(S)}
y_S(\tau) \leq \optcost{e}$. As discussed before, $y_S(\tau) = 0$ for the
sets $S$ containing $u$. Therefore,
\begin{align*} 
	\sum_{S: e \in \delta(S)} y_S(\tau) 
		& = \sum_{S: v\in S \wedge u\notin S} y_S(\tau) + \sum_{S: u\in S \wedge v\notin S} y_S(\tau) \\
		& = \sum_{S: v\in S \wedge u\notin S} y_S(\tau) 
		 \leq \sum_{S: v\in S} y_S(\tau) \\
		& \leq \atime(u) - \atime(v) & \text{(by Lemma~\ref{lem:request_wc_ub})}\\
		& \leq \optcost{e}.
\end{align*}

Now, we prove that once a dual constraint for an edge $e = (u,v)$ becomes
tight, the involved $y_S$ values are no longer increased. According to the
algorithm definition, $\Moat(u)$ and~$\Moat(v)$ become merged together. By
Property~\ref{item:once-in-love-always-in-love} of
Observation~\ref{obs:alg_properties}, from this moment on, any active
set $S$ contains either both $u$ and~$v$ or neither of them. Hence, there is
no active set $S$, such that $(u,v)\in \delta(S)$, and in particular there is
no such active growing set. Therefore, the value of $\sum_{S: e \in
\delta(S)} y_S$ remains unchanged, and hence the dual constraint
corresponding to edge $e$ remains tight and not violated.
\end{proof}

Finally, we prove that $\ALG$ returns a proper matching. 
We need to show that if a pair of requests remains unmatched, then 
appropriate dual variables increase and they will eventually trigger the 
matching event. 

\begin{lemma}
	\label{lem:proper_matched}
	For any input for the M(B)PMD problem, $\ALG$ returns a feasible matching.
\end{lemma}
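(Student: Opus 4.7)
The plan is to establish a progress invariant: as long as some request remains unmatched after all arrivals, a merge event must occur within finite time. Combined with the fact that each merge strictly decreases the number of active sets by one and that at most $2m$ active sets are ever created, this will imply that \ALG reaches a state with no free requests, i.e., a perfect matching.

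First I would fix a time $\tau$ after the last arrival at which some free requests remain and exhibit two active growing sets $S_1 \ne S_2$ containing free requests $u\in \Free(S_1)$ and $v\in \Free(S_2)$ that \ALG is permitted to pair. By Property~\ref{item:one_to_one}, every free request lies in an active growing set, and by the ``match as many pairs as possible'' step of \ALG, in the bipartite variant each active growing set holds free requests of a single polarity. In the non-bipartite case, the claim reduces to showing there are at least two growing sets: since $\sur(S)\in\{0,1\}$ and the number of free requests has the same parity as $2m$, the number of free requests is either $0$ or at least $2$, and Property~\ref{item:friseur} places them in distinct sets. In the bipartite case, the conservation identity $\sum_u \sign(u)=0$ over all requests (and over all matched ones, since matched pairs are cross-polarity) forces both polarities to be represented among the free requests, so two oppositely signed growing sets must coexist.

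Next I would argue that such a pair $(u,v)$ forces a merge within a bounded amount of time. Since $S_1$ and $S_2$ are the unique \emph{active} sets containing $u$ and $v$ respectively (Property~\ref{item:one_to_one}), they are the only sets crossing $e=(u,v)$ whose dual variable can change; all other sets $S$ with $e\in\delta(S)$ are inactive and have $y_S$ frozen. Both $S_1$ and $S_2$ are growing, so $\sum_{S:\,e\in\delta(S)} y_S$ increases at rate exactly $2$ whenever time passes without intervening events. Since this sum is bounded above by $\optcost{e}$ by Lemma~\ref{lem:feasible_dual}, within at most $\tfrac12\bigl(\optcost{e} - \sum_{S:\,e\in\delta(S)} y_S(\tau)\bigr)$ time units either the constraint for $e$ becomes tight (merging $S_1$ and $S_2$ and matching at least one free pair), or some other tight-constraint event fires first. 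Either way a merge event occurs.

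The main bookkeeping obstacle is the case where a different merge fires first: I would simply iterate the argument. Each merge event decreases the number of active sets by one, so after finitely many iterations no two growing sets with compatible free requests can coexist. By the dichotomy above, this rules out any free requests remaining; hence \ALG outputs a feasible matching of all $2m$ requests.
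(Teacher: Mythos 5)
Your proof is correct but takes a genuinely different route from the paper's. The paper argues by contradiction via LP duality: if some request $u$ is never matched, $\Moat(u)$ stays an active growing set with $\sur(\Moat(u))>0$, so $y_{\Moat(u)}$ grows without bound while remaining dual-feasible (Lemma~\ref{lem:feasible_dual}); this makes the dual objective unbounded, contradicting weak duality since a finite primal solution exists. You instead give a constructive progress argument: after the last arrival, if free requests remain, a parity/sign-conservation argument (parity of $2m$ in the non-bipartite case, $\sum_u\sign(u)=0$ and single-polarity-per-growing-set in the bipartite case) guarantees two compatible growing sets; the dual value across such an edge then rises at rate exactly $2$ and is capped by $\optcost{e}$ (again via Lemma~\ref{lem:feasible_dual}), forcing a merge within finite time; and merges are bounded in number since each strictly decreases the active-set count. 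The paper's proof is shorter and more abstract; yours is more elementary and arguably more informative, since it localizes \emph{which} pair must become tight and implicitly bounds the time to termination. One small imprecision in your write-up: the relevant count is that at most $2m$ merges can ever occur (each arrival increments, each merge decrements the active-set count, which stays nonnegative), not that ``at most $2m$ active sets are ever created'' -- the latter overcounts the output of merges -- but this does not affect the argument's validity.
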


\begin{proof}
	Suppose for a contradiction that \ALG does not match some request $u$.
	Then, by~Property~\ref{item:one_to_one} of
	Observation\ref{obs:alg_properties}, $\Moat(u)$ is always an active
	growing set and by Property~\ref{item:friseur}, $\sur(\Moat(u))>0$.
	Therefore, the corresponding dual variable $y_{\Moat(u)}$ always increases
	during the execution of \ALG and appears in the objective function of
	\dual with a positive coefficient. By~Lemma~\ref{lem:feasible_dual}, the
	solution of \dual maintained by \ALG is always feasible, and hence the
	optimal value of \dual would be unbounded. This would be a contradiction,
	as there exists a finite solution to the primal program \primal (as all
	distances in the metric space are finite).
\end{proof}

\section{Cost Analysis}

In this section, we show how to relate the cost of the matching returned
by \ALGlong to the value of the produced dual solution. First, we
show that the total waiting cost of the algorithm is equal to the value
of the dual solution. Afterwards, we bound the connection cost of \ALG
by $2 m$ times the dual solution, where $2 m$ is the number of requests in the
input. This, along with Lemma~\ref{lem:dual}, yields the competitive
ratio of $2m+1$.

\subsection{Waiting Cost}

In the proof below, we link the generated waiting cost with the growth of 
appropriate dual variables. To this end, suppose that a set $S$ is an active
set for time period of length $\Delta t$. By~Property~\ref{item:friseur} of
Observation~\ref{obs:alg_properties}, $S$ contains exactly $\sur(S)$ free
points, and thus the waiting cost incurred within this time by requests in $S$
is $\Delta t \cdot \sur(S)$. Moreover, in the same time interval, the dual
variable $y_S$ increases by $\Delta t$, which contributes the same amount,
$\sur(S) \cdot \Delta t$, to the growth of the objective function of $\dual$.
The following lemma formalizes this observation and applies it to all active
sets considered by \ALG in its runtime.

\begin{lemma}
	\label{lem:wc}
	The total waiting cost of \ALG is equal to $\sum_{S \subseteq \R} \sur(S)
	\cdot \y{S}{T}$, where $T$ is the time when \ALG matches the last request.
\end{lemma}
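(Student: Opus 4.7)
The plan is to express both sides of the claimed identity as integrals over time and verify that the integrands agree pointwise, so that the equality follows from a single time-sweep argument.

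For the left-hand side, I would write the total waiting cost of \ALG as
\[
\int_{0}^{T} \bigl|\{u : u \text{ is free at time } \tau\}\bigr|\, d\tau,
\]
since at every instant each free request accrues waiting cost at rate $1$. By Property~\ref{item:one_to_one} of Observation~\ref{obs:alg_properties}, the free requests at any time $\tau$ are partitioned among the active growing sets; by Property~\ref{item:friseur}, each active set $S$ contains exactly $\sur(S)$ free requests (and any non-growing active set contributes $0$ free requests, consistent with $\sur(S)=0$ for such a set). Therefore the integrand equals $\sum_{S \text{ active at } \tau} \sur(S)$.

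For the right-hand side, I would reinterpret $\sum_{S \subseteq \R} \sur(S) \cdot \y{S}{T}$ through the dynamics of \ALG's dual variables. Each variable $y_S$ is initialized to $0$ before $S$ can ever become active, is increased continuously at rate $1$ precisely during the intervals when $S$ is active and growing, and is frozen once $S$ becomes inactive (the algorithm only grows variables of active growing sets). Hence $\y{S}{T}$ equals the total length of the time intervals during which $S$ was active and growing. Swapping summation and integration gives
\[
\sum_{S \subseteq \R} \sur(S) \cdot \y{S}{T} \;=\; \int_{0}^{T} \sum_{S \text{ active at } \tau} \sur(S) \, d\tau,
\]
which matches the integrand derived for the waiting cost. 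Equating the two expressions yields the lemma.

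The only mild subtlety I foresee concerns event times (arrivals and merges), which instantaneously alter the family of active sets; since there are only finitely many such events, they form a set of measure zero and can be ignored in the integrals. Moreover, at the instant of a merge the newly active set $S$ satisfies $y_S = 0$ (its dual variable was re-initialized to $0$ at the latest arrival of a request in $S$, and $S$ was not active beforehand), so no discontinuity in the dual objective is introduced. I do not expect a deeper obstacle: the lemma is essentially a bookkeeping identity asserting that the dual objective grows at precisely the instantaneous rate at which \ALG accrues waiting cost.
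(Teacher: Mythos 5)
Your proposal is correct and takes essentially the same approach as the paper: both express the total waiting cost as $\int_0^T \wait(\tau)\,\dv{\tau}$, use Properties~\ref{item:one_to_one} and \ref{item:friseur} to rewrite the integrand as $\sum_S \sur(S)\cdot\mathds{1}[S\text{ active and growing at }\tau]$, and then exchange the sum and integral using the fact that $y_S$ grows at rate $1$ exactly when $S$ is active and growing.
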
 

\begin{proof}
We define $\act(\tau)$ as the family of sets that are active and growing at
time~$\tau$. By~Property~\ref{item:one_to_one} and
Property~\ref{item:friseur} of Observation~\ref{obs:alg_properties}, the
number of free requests at time $\tau$, henceforth denoted $\wait(\tau)$, is
then equal to $\sum_S \sur(S) \cdot \mathds{1}[S \in \act(\tau)]$. The total
waiting cost at time $T$ can be then expressed as
\begin{align*}
	\int_0^T \wait(\tau) \,\dv{\tau} 
		& = \int_0^T \sum_S \sur(S) \cdot \mathds{1}[S \in \act(\tau)] \,\dv{\tau} \\
		& = \sum_S \sur(S) \int_0^T \mathds{1}[S \in \act(\tau)] \,\dv{\tau}
		= \sum_S \sur(S) \cdot y_S(T),
\end{align*}
where the last equality holds as at any time, \ALG increases $y_S$ value if
and only if $S$ is active and growing.
\end{proof}

\subsection{Connection Cost}
\label{sec:connCost}

Below, we relate the connection cost of \ALG to the value of the final
solution of \dual, created by \ALG. We focus on the set of marked edges, which
are created by \ALG in Line~\ref{alg:markEdge} of
Algorithm~\ref{alg:the-algorithm2}. We show that for any time, the set of
marked edges restricted to an active or an inactive set~$S$ forms a ``spanning
tree'' of requests of $S$. That is, there is a unique path of marked edges
between any two requests from $S$. (Note that this path projected to the
metric space may contain cycles as two requests may be given at the same point
of $\X$.) We start with a~helper observation.

\begin{observation}
	\label{obs:marked_boundary}
	Fix any set $S$. If $S$ is active at time $\tau$, then its boundary $\delta(S)$ 
	does not contain any marked edge at time $\tau$.
\end{observation}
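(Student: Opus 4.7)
My plan is to argue by contradiction, leveraging Properties~\ref{item:one_to_one} and \ref{item:once-in-love-always-in-love} of Observation~\ref{obs:alg_properties}. Suppose that $S$ is active at time $\tau$ and that some marked edge $e=(u,v)$ lies in $\delta(S)$, so that, without loss of generality, $u\in S$ and $v\notin S$. I want to trace the history of $e$ backwards and reach a contradiction with the fact that $u$ and $v$ currently live in different active sets.

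The first key step is to recall when a marking actually happens. According to Line~\ref{alg:markEdge} of Algorithm~\ref{alg:the-algorithm2}, $e$ must have been marked at some earlier time $\tau_0 \le \tau$, precisely at the moment when the dual constraint for $e$ became tight and \ALG merged the two previously distinct active sets $\Moat(u)$ and $\Moat(v)$ into one active set $\Moat(u) \uplus \Moat(v)$. In particular, immediately after time $\tau_0$, both $u$ and $v$ belong to the same active set, i.e., $\Moat(u)=\Moat(v)$.

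The second key step is to show this equality persists until time $\tau$. This is exactly Property~\ref{item:once-in-love-always-in-love}: once $\Moat(u)=\Moat(v)$, it remains so forever. Hence at time $\tau$ we still have $\Moat(u)=\Moat(v)$.

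Finally, I combine this with Property~\ref{item:one_to_one}, which asserts that at time $\tau$ each request lives in a unique active set. Since $u\in S$ and $S$ is active at time $\tau$, the unique active set containing $u$ must be $S$ itself, so $S=\Moat(u)=\Moat(v)$, which forces $v\in S$, contradicting $v\notin S$. I do not foresee a real obstacle here: the content of the observation is essentially a bookkeeping consequence of the merging rule plus the two cited properties, and the only subtlety is being careful that $\tau_0 \le \tau$ and that nothing in between can ``separate'' $u$ from $v$, which is precisely what Property~\ref{item:once-in-love-always-in-love} rules out.
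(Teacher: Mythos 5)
Your argument is correct and is essentially the paper's own proof, merely phrased as a contradiction: you invoke the same two facts (marking occurs exactly at the merge, and Property~\ref{item:once-in-love-always-in-love} keeps $u$ and $v$ in the same active set thereafter), with Property~\ref{item:one_to_one} used to make the uniqueness of the active set explicit. No substantive difference from the paper.
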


\begin{proof}
	After an edge $(u, v)$ becomes marked, both $u$ and $v$ belong to newly
	created active set. From now on, by
	Property~\ref{item:once-in-love-always-in-love} of
	Observation~\ref{obs:alg_properties}, they remain in the same active set
	till the end of the execution. Therefore, this edge will never be
	contained in a boundary of an~active set.
\end{proof}

\begin{lemma}
	\label{lem:components}
	At any time, for any active or inactive set $S$, the subset of all marked edges 
	with both endpoints in $S$ forms a spanning tree of all requests from $S$.
\end{lemma}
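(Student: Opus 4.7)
The plan is to prove the lemma by induction on the time at which set $S$ is created (or equivalently, on the laminar structure of active/inactive sets guaranteed by Property~\ref{item:laminar}). For each set $S$ that is ever active, I will establish two things: (i) the marked-edge property holds at the moment $S$ is created; (ii) the marked edges with both endpoints in $S$ do not change thereafter, so the property persists at every later time.

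For the base case, when a request $u$ arrives, $\Moat(u)=\{u\}$ is a singleton, and the empty edge set is trivially a spanning tree of a single vertex. For the inductive step, suppose $S$ is created at time $\tau_0$ by merging two previously active sets $S_1$ and $S_2$ via a tight edge $e=(u,v)$ with $u\in S_1$, $v\in S_2$. By induction, just before $\tau_0$ the marked edges inside $S_1$ form a spanning tree of $S_1$, and similarly for $S_2$. To conclude that the marked edges in $S$ form a spanning tree of $S$, I need to verify that the only marked edge crossing between $S_1$ and $S_2$ at time $\tau_0$ is $e$ itself: any previously marked edge $e'=(u',v')$ would, by Property~\ref{item:once-in-love-always-in-love}, keep $u'$ and $v'$ in the same active set forever after its marking, so at time $\tau_0$ they must both lie in one of the active sets of the current partition; hence $e'$ cannot straddle the active sets $S_1$ and $S_2$. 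Combining the two subtrees with the single connecting edge $e$ yields a spanning tree of $S$.

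Next I need to show persistence: that the set of marked edges with both endpoints in $S$ never changes after $\tau_0$. While $S$ is active, every request $w\in S$ has $\Moat(w)=S$ by Property~\ref{item:one_to_one}, so no edge with both endpoints in $S$ can satisfy the precondition $\Moat(u')\neq\Moat(v')$ for becoming marked in Line~\ref{alg:tightConstraint}. Once $S$ becomes inactive at time $\tau_1$ (by being merged into some larger active set $S'$), every $w\in S$ has $\Moat(w)\supseteq S$ from $\tau_1$ onward, again by Property~\ref{item:once-in-love-always-in-love}, so any two requests in $S$ always share the same active set and no edge inside $S$ can ever be newly marked. Therefore the spanning tree produced at creation time remains the complete set of marked edges inside $S$ at every later time.

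The only mildly subtle step is ruling out a previously marked edge crossing between $S_1$ and $S_2$ at the creation of $S$; this is where I combine Observation~\ref{obs:marked_boundary} (no marked edges on the boundary of a currently active set) with Property~\ref{item:once-in-love-always-in-love}. The rest is a routine induction on the laminar structure of active sets, and no new analytic ingredient is required.
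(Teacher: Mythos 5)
Your proof is correct and follows essentially the same route as the paper's: induct on merging events, use Observation~\ref{obs:marked_boundary} (which you also rederive via Property~\ref{item:once-in-love-always-in-love}) to conclude that the two old spanning trees plus the newly marked edge span the merged set, and then observe that no marked edge is ever later added inside an existing active or inactive set. Your persistence argument is spelled out in a bit more detail than the paper's one-line remark, but the underlying reasoning is identical.
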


\begin{proof}
	We show that the property holds at time passes. When a new request
	arrives, a new active growing set containing only one request is created.
	This set is trivially spanned by an empty set of marked edges.
		
	By the definition of \ALG, a new active set appears when a dual constraint
	for some edge $e = (u,v)$ becomes tight. Right before it happens, the
	active sets containing $u$ and $v$ are $\Moat(u)$ and $\Moat(v)$,
	respectively. At that time, marked edges form spanning trees of sets
	$\Moat(u)$ and $\Moat(v)$ and, by Observation~\ref{obs:marked_boundary},
	there are no marked edges between these two sets. Hence, these spanning
	trees together with the newly marked edge $e$ constitute a spanning tree
	of the requests of $S = \Moat(u) \uplus \Moat(v)$. Finally, a set may
	become inactive only if it was active before, and \ALG never adds any
	marked edge inside an already existing active or inactive set.
\end{proof}

Using the lemma above, we are ready to bound the connection cost of one matching edge by 
the cost of the solution of \dual. 

\begin{lemma}
	\label{lem:cc}
	The connection cost of any matching edge is at most $2 \cdot
	\sum_{S\subseteq \R} \sur(S)\cdot y_S(T)$, where $T$ is the time when \ALG
	matches the last request.
\end{lemma}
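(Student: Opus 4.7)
The plan is to bound the connection cost $\dist(\pos(u'),\pos(v'))$ of an arbitrary matching edge $(u', v')$ using the tree of marked edges built by \ALG. When $(u', v')$ is matched, both endpoints lie in the same (newly-merged) active set $S^*$, which by Lemma~\ref{lem:components} is spanned by the marked edges having both endpoints in $S^*$. Let $\pi = w_0, w_1, \ldots, w_k$ be the unique path from $u' = w_0$ to $v' = w_k$ in this spanning tree, and let $e_i = (w_i, w_{i+1})$. Applying the triangle inequality edge by edge and using $\dist(\pos(w_i),\pos(w_{i+1})) \leq \optcost{e_i}$ gives
\[
\dist(\pos(u'),\pos(v')) \;\leq\; \sum_{i=0}^{k-1} \optcost{e_i}.
\]

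Next, I would substitute the tightness condition. Each $e_i$ was marked at the instant $\tau_i$ when its dual constraint first became tight, so $\optcost{e_i} = \sum_{S: e_i \in \delta(S)} y_S(\tau_i)$. A small subclaim is that $y_S(\tau_i) = y_S(T)$ for every such $S$: by property~\ref{item:once-in-love-always-in-love} of Observation~\ref{obs:alg_properties}, after time $\tau_i$ the two endpoints of $e_i$ always share an active set, so no set $S$ with $e_i \in \delta(S)$ can ever be active again, and $y_S$ stops growing. Swapping the order of summation then yields
\[
\sum_{i=0}^{k-1} \optcost{e_i} \;=\; \sum_{S \subseteq \R} y_S(T) \cdot c_S, \qquad \text{where } c_S := |\{i : e_i \in \delta(S)\}|.
\]

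The crux of the argument, and the step I expect to be the main obstacle, is the combinatorial bound $c_S \leq 2 \cdot \sur(S)$ for every $S$ with $y_S > 0$. Since \ALG grows $y_S$ only while $S$ is active and growing, the assumption $y_S > 0$ forces $S$ to have been active at some time $\tau_S$; Lemma~\ref{lem:components} applied at $\tau_S$ then says that $S$ is spanned by the marked edges inside it, so $S$ induces a connected subtree of the marked-edge tree containing $S^*$. A standard fact about trees is that the unique simple path between two vertices crosses the vertex boundary of any connected subtree at most twice (it may enter once and leave once, and cannot re-enter without creating a cycle), so $c_S \leq 2$. Combined with $\sur(S) \geq 1$, which holds because an active growing set contains at least one free request by property~\ref{item:friseur} of Observation~\ref{obs:alg_properties}, we obtain $c_S \leq 2 \leq 2\sur(S)$. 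Plugging this back into the previous display gives $\dist(\pos(u'),\pos(v')) \leq 2 \sum_{S \subseteq \R} \sur(S)\cdot y_S(T)$, as claimed.
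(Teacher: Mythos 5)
Your proof is correct and takes essentially the same route as the paper's: bound the matching edge's distance along the marked‑edge path $P$ via the triangle inequality, substitute the tight dual constraints, swap the order of summation, use $\sur(S)\ge 1$ for any $S$ with $y_S>0$, and bound the number of crossings $|\delta(S)\cap P|$ by $2$ using the fact that the marked edges inside $S$ span $S$ (Lemma~\ref{lem:components}). The only cosmetic differences are that the paper derives the crossing bound by explicitly exhibiting a cycle of marked edges from a third crossing, whereas you cite the equivalent connected‑subtree fact, and that you evaluate each dual constraint at its own marking time $\tau_i$ (observing the involved $y_S$ freeze thereafter) rather than at the matching time $\tau$ followed by $y_S(\tau)\le y_S(T)$.
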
 

\begin{proof}
	Fix a matching edge $(u, v)$ created by \ALG at time $\tau$. Its
	connection cost is the distance $\dist(\pos(u),\pos(v))$ between the
	points corresponding to requests $u$ and $v$ in the underlying metric
	space.

	We consider the state of \ALG right after it matches $u$ with $v$. By
	Lemma~\ref{lem:components}, the active set $S = \Moat(u) = \Moat(v)$
	containing $u$ and $v$ is spanned by a tree of marked edges. Let $P$ be
	the (unique) path in this tree connecting $u$ with $v$. Using the triangle
	inequality, we can bound $\dist(\pos(u),\pos(v))$ by the length of $P$
	projected onto the underlying metric space.

	Recall that for any edge $e = (w,w')$, it holds that
	$\dist(\pos(w),\pos(w')) \leq \optcost{e}$. Moreover, if $e$ is marked,
	the dual constraint for edge $e$ holds with equality, that is,
	$\optcost{e} = \sum_{S: e\in \delta(S)} y_S(\tau)$. Therefore,
	\begin{align*} 
		\dist(\pos(u),\pos(v)) 
			& \leq \sum_{(w,w') \in P} \dist(\pos(w), \pos(w')) 
			 \leq \sum_{e \in P} \optcost{e} \\ 
			& = \sum_{e \in P} \sum_{S:e \in \delta(S)} y_S(\tau) 
			 = \sum_{S} |\delta(S) \cap P| \cdot y_S(\tau) \\
			& \leq \sum_{S} |\delta(S) \cap P| \cdot \sur(S) \cdot y_S(\tau) \\
			& \leq \sum_{S} |\delta(S) \cap P| \cdot \sur(S) \cdot y_S(T).
	\end{align*}
	The penultimate inequality holds because a dual variable $y_S$ can be
	positive only if $\sur(S) \geq 1$. It is now sufficient to prove that for
	each (active or inactive) set~$S$, it holds that $|\delta(S) \cap P| \leq 2$, i.e., the
	path $P$ crosses each such set $S$ at most twice.
	
	For a contradiction, suppose that there exists an (active or inactive) set
	$S$, whose boundary is crossed by path $P$ more than twice. We direct all edges
	on~$P$ towards $v$ (we follow $P$ starting from request $u$ and move
	towards $v$). Note that $u$ may be inside or outside of $S$.
	Let $e_1 = (w_1, w_2)$ be the first edge on $P$ such that $w_1\in
	S$ and $w_2\not\in S$, i.e., the first time when path $P$ leaves $S$.
	Let $e_2 = (w_3, w_4)\in P$ be the first edge after $e_1$, 
	such that $w_3\not \in S$ and $w_4 \in S$, that is, the first time when path
	$P$ returns to $S$ after leaving it with edge $e_1$.
	Edge $e_2$ must exist as we assumed that $P$ crosses the boundary of 
	$S$ at least three times.

	By Lemma~\ref{lem:components}, a subset of the marked edges constitutes a
	spanning tree of $S$. Hence, there exists a path of marked edges contained
	entirely in $S$ that connects requests $w_1$ and $w_4$. Furthermore, a
	sub-path of $P$ connects $w_2$ and $w_3$ outside of~$S$. These two paths
	together with edges $e_1$ and $e_2$ form a cycle of marked edges. However,
	by Lemma~\ref{lem:components} and
	Observation~\ref{obs:marked_boundary}, at any time, the set of marked edges
	forms a forest, which is a contradiction.
\end{proof}

\subsection{Bounding the Competitive Ratio}

Using above results we are able to bound the cost of \ALGlong.

\begin{theorem}
	\label{thm:main}
		\ALGlong is $(2m+1)$-competitive for the M(B)PMD problem.
	\end{theorem}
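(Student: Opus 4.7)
The plan is to assemble the competitive ratio by combining the three preceding lemmas: the waiting-cost identity (Lemma~\ref{lem:wc}), the per-edge connection-cost bound (Lemma~\ref{lem:cc}), and weak duality (Lemma~\ref{lem:dual}). Let $T$ be the time at which \ALG matches the last request, and let $D := \sum_{S \subseteq \R} \sur(S) \cdot y_S(T)$ denote the value of the dual solution produced by \ALG at that time. By Lemma~\ref{lem:feasible_dual}, this dual solution is feasible, so Lemma~\ref{lem:dual} immediately yields $D \leq \OPT(\I)$.

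Next I would decompose $\ALG(\I) = W + C$, where $W$ is the total waiting cost and $C$ is the total connection cost. Lemma~\ref{lem:wc} gives $W = D$ directly. For the connection cost, I observe that since \ALG produces a perfect matching of $2m$ requests (Lemma~\ref{lem:proper_matched}), there are exactly $m$ matching edges. Applying Lemma~\ref{lem:cc} to each of these edges and summing yields
\[
C \;\leq\; m \cdot 2D \;=\; 2m \cdot D.
\]
Combining the two bounds gives $\ALG(\I) = W + C \leq D + 2m \cdot D = (2m+1) \cdot D \leq (2m+1) \cdot \OPT(\I)$, which is the claimed competitive ratio.

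No real obstacle arises in this final step; all the technical work has already been carried out in Sections~\ref{sec:correctness} and the preceding subsections (feasibility of the dual, the spanning-tree structure of marked edges, and the $|\delta(S) \cap P| \leq 2$ argument). The only thing to be slightly careful about is making the counting of matching edges explicit and noting that the bound in Lemma~\ref{lem:cc} is with respect to $y_S(T)$ for the \emph{final} time $T$, so the same dual value $D$ can be reused for all $m$ matching edges simultaneously—this is why summing produces a clean factor of $2m$ rather than something larger.
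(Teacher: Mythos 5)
Your proposal is correct and follows essentially the same route as the paper's own proof: combine Lemma~\ref{lem:wc} for the waiting cost, apply Lemma~\ref{lem:cc} to each of the $m$ matching edges for the connection cost, and conclude via weak duality (Lemma~\ref{lem:dual}). You are a bit more explicit than the paper in invoking Lemma~\ref{lem:feasible_dual} and Lemma~\ref{lem:proper_matched} to justify dual feasibility and the count of $m$ edges, but the argument is the same.
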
  

\begin{proof}
	Fix any input instance $\I$ and let $\dual$ be the corresponding dual
	program. Let $D$ be the cost of the solution to \dual output by \ALG. By
	Lemma~\ref{lem:wc}, the total waiting cost of the algorithm is bounded
	by $D$ and by Lemma~\ref{lem:cc}, the connection cost of a single edge
	in the matching is bounded by $2\cdot D$. Therefore,
	\[
		\ALG(\I) \leq D + m \cdot 2 D = (2m+1) \cdot D \leq (2m+1) \cdot \OPT(\I),
	\]
	where the first inequality holds as there are exactly $m$ matched edges
	and the last equality follows by Lemma~\ref{lem:dual}.
\end{proof}

\bibliographystyle{plainurl}
\bibliography{references}

\begin{appendix}
\section{Tightness of the Analysis}

We can show that our analysis of \ALGlong is asymptotically tight, i.e., the
competitive ratio of \ALGlong is $\Omega(m)$.

\begin{theorem}
Both for MPMD and MBPMD problems, there exists an instance~\I, such that 
$\ALG(\I) = \Omega(m) \cdot \OPT(\I)$.
\end{theorem}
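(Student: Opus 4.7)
The plan is to exhibit a concrete $2m$-request instance $\I$ with $\ALG(\I) = \Omega(m) \cdot \OPT(\I)$. The construction is remarkably simple: place $2m$ requests on the real line at positions $1, 2, \dots, 2m$, all arriving at time $0$. In the bipartite setting, assign alternating polarities $\pm 1$, so that every pair $(i,j)$ with $i+j$ odd is a legal match; this includes both the matching \OPT will use and the matching \ALG will be coaxed into. An offline schedule pairing $(2i{-}1, 2i)$ at time $0$ has total cost $m$, so $\OPT(\I) \le m$.

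For the lower bound on $\ALG(\I)$, I would observe that at time $\tau = 1/2$ every adjacent singleton pair becomes tight simultaneously (since $y_{\{i\}}(\tau) + y_{\{i+1\}}(\tau) = 1 = \optcost((i,i+1))$) and no other dual constraint is yet tight. The while loop of Algorithm~\ref{alg:the-algorithm2} processes these $2m-1$ simultaneously tight constraints in an unspecified order, so for a worst-case execution I would let the adversary choose the order
\[
(m, m{+}1),\ (m{-}1, m),\ (m{+}1, m{+}2),\ (m{-}2, m{-}1),\ (m{+}2, m{+}3),\ \dots,\ (1, 2),\ (2m{-}1, 2m),
\]
which expands a central merged set outward, pair by pair. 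An inductive argument would then show that this order is legal and that the matched pairs produced by \ALG are $(m, m{+}1), (m{-}1, m{+}2), (m{-}2, m{+}3), \dots, (1, 2m)$, whose connection costs $1, 3, 5, \dots, 2m-1$ sum to $m^2$. Adding the total waiting cost $2m \cdot (1/2) = m$ gives $\ALG(\I) = m^2 + m \ge (m+1)\cdot \OPT(\I)$.

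The main obstacle is the inductive verification that every intended edge is still tight at the moment it is processed. Within the while loop no time elapses, so the only sets whose $y$-values can appear in the constraint for the next edge on the list are the two frozen singletons at its endpoints (each with $y = 1/2$) together with already-formed larger sets, all of which were born at $\tau = 1/2$ with $y = 0$ and never grew; hence each prescribed edge still satisfies $\sum_{S:e\in\delta(S)} y_S = 1 = \optcost(e)$ when its turn comes. Once this bookkeeping is in place, identifying the matched pairs and summing their connection costs is a direct computation, and the MBPMD version follows immediately because the alternating polarities make every intended \OPT pair $(2i{-}1, 2i)$ and every intended \ALG pair $(i, 2m{+}1{-}i)$ consist of requests of opposite signs.
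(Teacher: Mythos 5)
Your construction—$2m$ co-arriving requests at integer positions $1,\dots,2m$ on the line—is genuinely different from the paper's, which uses a two-point metric (distance $2$) and staggers arrival times $0,\ 1+\varepsilon,\ 1+3\varepsilon,\dots$ with $\varepsilon = 1/m$, causing \ALG to pay $2$ for every pair while \OPT matches co-located consecutive requests at each point for a total cost below $4$. Both achieve $\Omega(m)$; yours has the appeal of a single arrival time and very transparent dual bookkeeping, while the paper's has only two metric points and, crucially, no simultaneity.

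The issue in your version is exactly that simultaneity: at $\tau=1/2$ all $2m-1$ adjacent constraints become tight at once, and your lower bound hinges on the while loop processing them in a specific centrifugal order. Algorithm~\ref{alg:the-algorithm2} does not specify how to break such ties, and for another (equally legitimate) processing order—say $(1,2),(3,4),\dots,(2m-1,2m)$—\ALG would match adjacent pairs, pay connection cost $m$, and your instance would yield only a constant ratio. So as written you have shown that \emph{some} execution consistent with the pseudocode is $\Omega(m)$ off from \OPT, not that the cost $\ALG(\I)$—which the theorem treats as a well-defined quantity—is $\Omega(m)\cdot\OPT(\I)$. The paper sidesteps this entirely: because arrivals are staggered, at any moment there are at most two tight constraints, and processing them in either order produces the same merge and the same matched pair, so the argument is robust to tie-breaking. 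You could repair your construction the same way, e.g.\ release $(m,m+1)$ at time $0$, $(m-1,m+2)$ at time $\delta$, and so on, so that $(m,m+1)$ becomes tight strictly first and the centrifugal expansion is forced rather than chosen; the remaining arithmetic (connection cost $\sum_{k=1}^{m}(2k-1)=m^2$ versus $\OPT\le m+O(m\delta)$) then goes through essentially as you wrote it.
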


\begin{proof}
Let $m>0$ be an even integer and $\varepsilon = 1/m$. Let $\mathcal{X}$ be
the metric containing two points $p$ and $q$ at distance $2$.
	
In the instance \I, requests are released at both points $p$ and $q$ at
times $0, 1+\varepsilon, 1+3 \varepsilon, 1+5 \varepsilon, \ldots, 1+(2m-3)
\cdot \varepsilon$. For the MBPMD problem, we additionally specify request
polarities: at $p$, all odd-numbered requests are positive and all
even-numbered are negative, while requests issued at $q$ have exactly opposite
polarities from those at $p$.
	
Regardless of the variant (bipartite or non-bipartite) we solve, \ALG matches
the first pair of requests at time $1$, when their active growing sets are
merged, forming a new active non-growing set. Every subsequent pair of
requests appears exactly $\varepsilon$ after the previous pair becomes matched.
Therefore, they are matched together $\varepsilon$ after their arrival, when
their growing sets are merged with the large non-growing set containing all
the previous pairs of requests. Hence, the total connection cost of \ALG is
equal to $2 m$. On the other hand, observe that the total cost of a solution
that matches consecutive requests at each point of the metric space separately
is equal to $2 \cdot ((1 + \varepsilon) + 2 \varepsilon \cdot (m-2)/2) = 2
\cdot (1 + (m-1) \cdot \varepsilon) < 4$. 
\end{proof}

\section{Derandomization Using a Spanning Tree}
\label{sec:steiner}

In this part, we analyze an algorithm that approximates the metric space
by a greedily and deterministically chosen spanning tree of requested points 
and employs the deterministic algorithm for trees of Azar et al.~\cite{AzChKa17}.
We show that such algorithm has the competitive ratio of $2^{\Omega(m)}$.
For simplicity, we focus on the non-bipartite variant, but the lower bound can be 
easily extended to the bipartite case.

More precisely, we define a natural algorithm \textsc{Tree Based} (\TB).
\TB internally maintains a spanning tree $T$ of 
metric space points corresponding to already seen requests. That is, whenever
\TB receives a request $u$ at point $\pos(u)$, it executes the following two
steps.
\begin{enumerate}
\item If there was no previous request at $\pos(u)$, \TB adds $\pos(u)$ to $T$, 
connecting it to the closest point from $T$. The addition is performed immediately,
at the request arrival. This part essentially mimics the behavior of the greedy algorithm 
for the online Steiner tree problem~\cite{ImaWax91}. 

\item To serve the request $u$, \TB runs the deterministic algorithm of~\cite{AzChKa17} on
the tree $T$.\footnote{The algorithm must be able to operate on a tree that may be
extended (new leaves may appear) in the runtime. The algorithm given by Azar
et al.~\cite{AzChKa17} has this property.} 
\end{enumerate}

\begin{theorem}	
The competitive ratio of \textsc{Tree Based} is $2^{\Omega(m)}$.
\end{theorem}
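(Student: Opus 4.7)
The plan is to exhibit a single adversarial instance $\I$ with $2m$ requests and show that $\TB(\I) / \OPT(\I) = 2^{\Omega(m)}$. The high-level idea follows the hint in Section~1.4: the metric is designed so that the initial subsequence of requests forces \TB's greedily-grown spanning tree to have a specific pair $(a,b)$ whose tree distance is $2^{\Omega(m)}$ but whose metric distance is $O(1)$; then two final requests at $a$ and $b$ expose this stretch.

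Let $k = m-1$. I would define the metric $\X$ as the shortest-path metric on a weighted graph with vertices $\{a, b, v_1, \ldots, v_k\}$ and edges: a ``doubling chain'' $(v_i, v_{i+1})$ of weight $2^{i+1}$ for $i = 1, \ldots, k-1$; a connector $(a, v_1)$ of weight $2$; a short edge $(v_k, b)$ of weight $1/2$; and a direct shortcut $(a, b)$ of weight $2$. The key metric property is that $\dist_\X(a,b) = 2$, while every $a$-to-$b$ path avoiding the shortcut has length $\Theta(2^k)$. The request sequence issues, at times $0, 1, \ldots, k-1$, a pair of requests at $v_1, v_2, \ldots, v_k$ respectively, then a singleton request at $a$ at time $k$ and a singleton request at $b$ at time $k+1$, for a total of $2k+2 = 2m$ requests.

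The bulk of the argument is to verify the structure of the greedy spanning tree $T$ produced by \TB. One checks inductively that when $v_i$ arrives, its nearest already-placed point is $v_{i-1}$ (the metric distances via the shortcut $(a,b)$ are too large to beat the direct chain edge of weight $2^i$); that $a$ attaches to $v_1$ at distance $2$ when it arrives; and that $b$ attaches to $v_k$ at distance $1/2$, since $\dist_\X(b, v_k) = 1/2 < 1 < \dist_\X(b, a) = 2$. Hence the $a$--$b$ path in $T$ traverses the whole chain and has weight $2 + 4 + 8 + \cdots + 2^k + 1/2 = \Theta(2^k)$, i.e.\ the stretch of $T$ for this pair is $2^{\Omega(m)}$. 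Then I would bound $\OPT(\I)$: each pair at $v_i$ is matched at zero cost; $a$ and $b$ are matched directly at connection cost $\dist_\X(a,b) = 2$ and $O(1)$ waiting, giving $\OPT(\I) = O(1)$.

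To lower-bound $\TB(\I)$, I would use that the tree algorithm of Azar et al.\ is a primal-dual / moat-growing procedure that matches a pair only when the dual constraint corresponding to its tree edge first becomes tight. Local pairs at $v_i$ are matched at zero cost. Once they are matched, $a$ and $b$ are the only two unmatched requests on $T$, and their dual constraint has right-hand side $\dist_T(a,b) = \Theta(2^k)$; the two dual variables each grow at rate $1$, so the constraint does not become tight until time $T_b + \Theta(2^k)/2$. Both $a$ and $b$ therefore incur waiting $\Theta(2^k)$, so $\TB(\I) \ge \Theta(2^k) = 2^{\Omega(m)}$, giving the claimed ratio. For the bipartite case, assign each local pair opposite polarities and give $a, b$ opposite polarities; the analysis is identical.

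The main obstacle is twofold: first, the careful metric analysis to ensure that the shortcut $(a,b)$ and edge $(v_k,b)$ do not perturb the greedy chain (i.e.\ no $v_i$ wrongly attaches to $v_1$ via $a$ or via $b$), which requires verifying that the relevant ``detour'' distances exceed $2^i$ for each $i$; and second, formalizing the lower bound on \TB's waiting time, which requires opening up the tree algorithm of Azar et al.\ and tracking its dual growth on an instance with only two remaining unmatched requests on a long tree path.
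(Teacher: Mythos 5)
Your construction has a genuine metric flaw that collapses the whole argument. The shortcut $(a,b)$, together with the edges $(a,v_1)$ and $(v_k,b)$, creates a short metric path $v_1$--$a$--$b$--$v_k$ of length $2+2+\tfrac12 = 4.5$, so $\dist(v_1,v_k)=4.5$ in $\X$. When $v_k$ arrives and the current tree is $T=\{v_1,\ldots,v_{k-1}\}$, the greedy Steiner routine attaches $v_k$ to its nearest \emph{metric} neighbor among the tree vertices; that neighbor is $v_1$ at distance $4.5$, not $v_{k-1}$ at distance $2^k$ (and indeed $\dist(v_k,v_j)=\min(2^{k+1}-2^{j+1},\,2^{j+1}+\tfrac12)$ is minimized at $j=1$). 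So $T$ acquires an edge $(v_1,v_k)$ of weight $4.5$, and after $a$ attaches to $v_1$ and $b$ attaches to $v_k$, the tree path $a$--$v_1$--$v_k$--$b$ has length $2+4.5+\tfrac12=7=O(1)$. The waiting cost of \TB on the $(a,b)$ pair is therefore $O(1)$, not $2^{\Omega(m)}$, and no super-constant ratio follows. Note also that your own sanity check (``verify that the detour distances exceed $2^i$'') is exactly the step that fails, for $i=k$.

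This is hard to patch within the ``doubling chain plus shortcut'' shape: to keep $v_k$ from hooking onto $v_1$, the shortcut weight $W$ must satisfy $W+\tfrac52\geq 2^k$, but then $\dist(a,b)\geq 2^k-\tfrac52$ and \OPT pays $\Theta(2^k)$ on the final pair, so the ratio is again $O(1)$. The paper avoids the problem by working on a \emph{ring}: the tree grows to cover one arc of the ring, the two covered-arc endpoints $p,q$ are close in the metric (across the shrinking uncovered gap) yet far in the tree, and there is no tree vertex near the gap for the greedy rule to shortcut through, precisely because the gap is empty. The paper also issues $m/2$ fresh $(p,q)$ pairs, one after another, and invokes only the single published fact that the tree algorithm waits at least the length of the longest edge on the relevant tree path. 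That way it never has to open up the algorithm of Azar et al.\ or reason about how colocated, simultaneous pairs are served, both of which your argument implicitly relies on.
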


\begin{proof} 
The idea of the lower bound is as follows. The adversary first gives $m/2$
requests that force \TB to create a tree $T$ with the stretch of
$2^{\Omega(m)}$ and then gives another $m/2$ requests, so that the initial $m$
requests can be served with a~negligible cost by $\OPT$. Afterwards, the
adversary consecutively requests a pair of points that are close in the metric
space, but  far away in the tree $T$.

Our metric space $\X$ is a continuous ring and we assume that $m$ is an~even
integer. Let $h$~be the length of this ring and let $\varepsilon = h/(m \cdot
2^{m-1})$.

In the first part of the input, the adversary gives $m/2$ requests in the
following way. The first two requests are given at time $0$ at antipodal
points (their distance is $h/2$). \TB connects them using one of two halves of
the ring. From now on, the tree $T$ of \TB will always cover a~contiguous part
of the ring. Each of the next $m/2 - 2$ requests is given exactly in the
middle of the ring part not covered by~$T$. For $j \in \{3, 4, \ldots, m/2
\}$, the $j$-th request is given at time $(2 \cdot (j-1) / m) \cdot
\varepsilon$.

This way, the ring part not covered by $T$ shrinks exponentially, and after
$m/2$ initial requests its length is equal to $h/2^{m/2-1}$. Let $p$ and $q$
be the endpoints (the only leaves) of~$T$. Then, $\dist(p,q) = h/2^{m/2-1}$,
but the path between $p$ and~$q$ in~$T$ is of length $h - \dist(p,q)$ and uses
an edge of length $h/2$.  As $T$ is built as soon as requests appear, its
construction is finished right after the appearance of the $(m/2)$-th request,
i.e., before time $\varepsilon$.

In the second part of the input, at time $\varepsilon$, the adversary gives
$m/2$ requests at the same points as the requests from the first phase. This
way, \OPT may serve the first $m$ requests paying nothing for the connection
cost and paying at most $(m / 2) \cdot \varepsilon = h/2^m$ for their waiting
cost.

In the third part of the input, the adversary gives $m/2$ pairs of requests,
each pair at points $p$ and $q$.  Each pair is given after  the previous
one is served by \TB. \OPT may serve each pair immediately after its arrival,
paying $\dist(p,q) = h/2^{m/2-1}$ for the connection cost. On the other hand,
\TB serves each such pair using a path that connects $p$ and $q$ in the tree
$T$. Before matching $p$ with $q$, \TB waits for a time which is at least the
length of the longest edge on this path, $h/2$ (see the analysis
in \cite{AzChKa17}). In total, the cost of \TB for the last $m$ requests alone
is at least $(m/2) \cdot (h/2)$,  while the total cost of \OPT for the whole
input is at most $h/2^m + (m/2) \cdot h/2^{m/2-1}$. This proves that the
competitive ratio of \TB is~$2^{\Omega(m)}$.
\end{proof}

\end{appendix}

\end{document}